\theoremstyle{definition}
\newtheorem*{lemma}{Lemma}
\newtheorem{theorem}{Theorem} 
\newtheorem*{example}{Example}
\begin{document}
	\date{\today}
	\title{Tight upper bound and monogamy relation for the maximum quantum value of the parity-CHSH inequality and applied to device-independent randomness}
	
	\author{Guannan Zhang}
	\affiliation{College of Science, China University of Petroleum, 266580 Qingdao,  China.}
	
	\author{Jiamin Xu}\email{daniel@jiuzhangqt.com} 
	\affiliation{Jiuzhang Quantum Technology Co., Ltd., 250102 Jinan, P.R. China.}
	
	\author{Ming Li}\email{liming@upc.edu.cn.}
	\affiliation{College of Science, China University of Petroleum, 266580 Qingdao,  China.}
	
	\author{Shuqian Shen}
	\affiliation{College of Science, China University of Petroleum, 266580 Qingdao,  China.}
	
	\author{Lei Li}
	\affiliation{College of Science, China University of Petroleum, 266580 Qingdao,  China.}
	
	\author{Shaoming Fei}
	\affiliation{School of Mathematical Sciences, Capital Normal University, 100048 Beijing, China}
	
	\begin{abstract}
		Based on the violation of Bell inequalities, we can verify quantum random numbers by examining the correlation between device inputs and outputs. 
		In this paper, we derive the maximum quantum value of the parity-CHSH inequality for a three-qubit system, establishing a tight upper bound applicable to any quantum state. Simultaneously, the necessary constraints for achieving saturation are analyzed. Utilizing this method, we present necessary and sufficient conditions for certain states to violate the parity-CHSH inequality. Building upon our proposal, the relationship between the noise parameter and the certifiable randomness in a bipartite entangled state is probed. Furthermore, we derive a monogamy relationship between the average values of the parity-CHSH inequality associated with the reduced three-qubit density matrices of GHZ-class states comprising four qubits.
	\end{abstract}
	\maketitle

	\section{INTRODUCTION}
	Bell inequality is a crucial topic in the field of quantum information, first proposed to address the EPR paradox proposed by Einstein, Podolsky and Rosen \cite{1}. For a given local hidden variable (LHV) model \cite{2}, the Bell inequality can be represented geometrically as a polyhedron, where the faces of the polyhedron correspond to the various Bell inequalities. So, all Bell inequalities describe the corresponding LHV models. The incompatibility between the correlations predicted by quantum mechanics and those allowed by the LHV model is known as nonlocality \cite{3}, which can be evidenced by the violation of the Bell inequality. As a non-classical phenomenon, Bell nonlocality plays a major role in quantum applications such as quantum randomness \cite{4,5}, quantum key distribution \cite{quantumkeydistribution1,quantumkeydistribution2}.	Monogamy is an important feature of nonlocality, indicating that in a quantum system, strong nonlocality restricts the nonlocal correlations that other participants can have with one of the parties. Such limitations reflect the unique nature of quantum entanglement \cite{monogamy1,monogamy2} and are essential for ensuring security in quantum communication \cite{monogamy3}, as well as for understanding the nature of quantum correlations in multipartite systems.
	
	Compared to the Bell inequality of  two-qubit system, quantum nonlocality in the case of multipartite exhibits a more complex structure. Based on the parity-CHSH game, Ribeiro has proposed a new multipartite Bell inequalities \cite{Parity-CHSH}, which extends the CHSH inequality to N parties.
	The parity-CHSH inequality is suitable for device-independent (DI) quantum key distribution and must satisfy structural requirements that prohibit the simultaneous achievement of perfectly correlated measurement results among all parties, as well as ensuring a sufficiently high violation of the Bell inequality and imposing additional constraints on the Bell settings. 
	
	Randomness is a crucial research issue in the field of information security, especially in cryptography. Almost all cryptographic algorithms and protocols require data that must be secret to attackers, and these keys must be random numbers, such as DI conference key agreement (DICKA) \cite{DICKA1,DICKA2} and DI randomness expansion (DIRE) \cite{DIRE1,DIRE2,DIRE3}. Unlike pseudorandom numbers generated based on classical mechanics 'surface randomness', in DI random number extensions, due to the intrinsic randomness within quantum systems, true random numbers independent of external variables can be generated utilizing the nonlocality of Bell inequality. In the Bell scenario,  min-entropy \cite{minimumentropy} is an effective tool for quantifying the randomness contained in the output sequence by analyzing the correlation between input and output classical bits. By combining the NPA hierarchy \cite{NPA1,NPA2,NPA3,NPA4} and the semidefinite programming (SDP) algorithm, the lower bound of minimum entropy can be obtained.
	
	For a given Bell inequality, calculating the maximum quantum value of a specific quantum state is essential for studying randomness and other related aspects. In this paper, we derive the tight upper bound of the maximum quantum value of the parity-CHSH inequality for any quantum state in three-qubit system, and we provide the corresponding constraint conditions that the quantum state must satisfy. Based on this method, we present the maximum quantum value of a state with mixed white noise and analyze the lower bound of the maximum probability distribution under the condition of reaching the maximum quantum violation. Moreover, we compare it with the upper bound of the numerical solution of the maximum probability distribution obtained through SDP,finding that randomness can gain an accurate numerical estimate. We also discuss the monogamy relation between the maximum quantum values for 4-qubit GHZ-class states, which reflects the distribution of quantum nonlocality in multipartite systems.
	
	\section{ the maximum violation for parity-CHSH inequality}
	Consider a Bell scenario with three parties, Alice and Bob performing two measurements labeled by $x,y \in {\left\lbrace 0,1\right\rbrace }$, with binary outcomes $a,b \in {\left\lbrace -1,1\right\rbrace }$ respectively. And Charlie perform with just one measurement with binary outcomes, donating $z=0,c\in {\left\lbrace -1,1\right\rbrace }$. We briefly explore the parity-CHSH inequality in three-qubit systems. Let $A_{x},B_{y},C_{0}$ be observables of the form $G=\vec{g}\cdot\vec{\sigma}=\sum_{i=1}^{3}g_{i}\sigma_{i}$, with $\vec{g}=\left( g_{1},g_{2},g_{3} \right) \in \left \{ \vec{a},\vec{a^{'}},\vec{b},\vec{b^{'}},\vec{c} \right \}$ be the three-dimensional real unit vectors, corresponding to $G\in \left \{ A_{0},A_{1},B_{0},B_{1},C_{0} \right \}$, and $\vec{\sigma}=(\sigma_{1},\sigma_{2},\sigma_{3})$, $\sigma_{i}(i=1,2,3)$ are the Pauli matrices. The parity-CHSH inequality is defined as follows:
	\begin{align}\label{eq1}
		\mathcal{B}
		&=\left \langle A_{1}B_{-}C_{0} \right \rangle +\left \langle A_{0}B_{+}  \right \rangle 
		\leqslant_{L} 1 \leqslant_{Q} \sqrt{2},
	\end{align}
	where $B_{\pm}=\frac{1}{2}(B_{0}\pm B_{1})$. The classical bound for LHV models is 1 and the maximum quantum value is $\sqrt{2}$ given by Ref. \cite{Parity-CHSH, boost}. Let $\rho$ be a general three-qubit state,
	\begin{gather}\label{eq2}
		\rho = \frac{1}{8}\sum_{i,j,k=0}^{3}t_{ijk}\sigma_{i}\sigma_{j}\sigma_{k},
	\end{gather}
	where $\sigma_{0}=I_{0}$ is the 2$\times$2 identity matrix, and
	\begin{gather*}
		t_{ijk} ={\rm Tr}(\rho(\sigma_{i}\otimes \sigma_{j}\otimes \sigma_{k})).
	\end{gather*}
	
	Prior to proving Theorem $\ref{thm1}$ on the violation of the parity-CHSH inequality, we present the following lemma from Ref. \cite{lemma}.
	\begin{lemma}\label{lemma}
		Let G be a rectangular matrix of size n$\times$n. For any vectors $\vec{x},\vec{y}\in R^{n} $ , we have that
		\begin{gather*}
			\left | \vec{x}^{T}G\vec{y} \right | \leqslant \lambda _{\max}\left | \vec{x} \right |\left | \vec{y} \right |  ,
		\end{gather*}
		where $\lambda_{\max}$ is the largest singular value of the matrix G. When $\vec{x}$ and $\vec{y}$ are singular vectors corresponding to the maximum singular value $\lambda_{\max}$ of matrix G, the inequality can be taken as equal.
	\end{lemma}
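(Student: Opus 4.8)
The plan is to prove the bound through the singular value decomposition (SVD) of $G$, which reduces the bilinear form $\vec{x}^{T}G\vec{y}$ to a weighted inner product whose weights are precisely the singular values. First I would write $G = U\Sigma V^{T}$, where $U$ and $V$ are $n\times n$ orthogonal matrices and $\Sigma = \mathrm{diag}(\lambda_{1},\ldots,\lambda_{n})$ collects the singular values in decreasing order, so that $\lambda_{1}=\lambda_{\max}$. Substituting this decomposition gives $\vec{x}^{T}G\vec{y} = (U^{T}\vec{x})^{T}\Sigma(V^{T}\vec{y})$, and setting $\vec{u}=U^{T}\vec{x}$, $\vec{v}=V^{T}\vec{y}$ recasts the form as $\sum_{i=1}^{n}\lambda_{i}u_{i}v_{i}$.

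Next I would exploit orthogonality: because $U$ and $V$ preserve the Euclidean norm, $|\vec{u}|=|\vec{x}|$ and $|\vec{v}|=|\vec{y}|$, so it suffices to bound $|\sum_{i}\lambda_{i}u_{i}v_{i}|$ in terms of $|\vec{u}|$ and $|\vec{v}|$. The triangle inequality together with $0\leq\lambda_{i}\leq\lambda_{\max}$ yields $|\sum_{i}\lambda_{i}u_{i}v_{i}|\leq\lambda_{\max}\sum_{i}|u_{i}||v_{i}|$, and a single application of the Cauchy--Schwarz inequality $\sum_{i}|u_{i}||v_{i}|\leq|\vec{u}||\vec{v}|$ delivers the claimed bound $|\vec{x}^{T}G\vec{y}|\leq\lambda_{\max}|\vec{x}||\vec{y}|$.

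For the saturation claim I would specialize $\vec{x}$ and $\vec{y}$ to the left and right singular vectors associated with $\lambda_{\max}$, namely the first columns of $U$ and $V$. In the rotated coordinates this corresponds to $\vec{u}=\vec{v}=(1,0,\ldots,0)^{T}$, for which the whole chain of inequalities collapses into equalities, giving $\vec{x}^{T}G\vec{y}=\lambda_{\max}$ with $|\vec{x}|=|\vec{y}|=1$.

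Since this is the classical variational characterization of the largest singular value, I do not anticipate a genuine obstacle. The only point requiring care is the equality analysis, where one must track when Cauchy--Schwarz is tight, i.e. when the mass of $\vec{u}$ and $\vec{v}$ is concentrated on the coordinates attaining $\lambda_{\max}$, and when the weighting by the $\lambda_{i}$ does not strictly lower the bound, so that the set of extremizers is characterized correctly rather than merely exhibited.
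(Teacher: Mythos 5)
Your argument is correct, and in fact the paper itself offers no proof of this lemma to compare against: it is imported verbatim from the cited reference (Li \emph{et al.}, Phys.\ Rev.\ A \textbf{96}, 042323), so your write-up supplies a proof the paper omits. The SVD route you take is the standard variational characterization of the largest singular value and is exactly the kind of argument the cited source relies on: writing $G=U\Sigma V^{T}$, passing to $\vec{u}=U^{T}\vec{x}$, $\vec{v}=V^{T}\vec{y}$, and combining $0\leqslant\lambda_{i}\leqslant\lambda_{\max}$ with Cauchy--Schwarz gives the bound, and evaluating at the first columns of $U$ and $V$ exhibits saturation. Two minor remarks: the statement's phrase ``rectangular matrix of size $n\times n$'' really means a square matrix, which is the case you treat, and since the lemma only asserts that equality \emph{is attained} at the top singular vectors (not that these are the only extremizers), exhibiting that choice already suffices --- the finer equality analysis you flag at the end is welcome but not required.
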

	\begin{theorem}
		\label{thm1}
		For any three-qubit quantum state $\rho$, the maximum quantum value $\mathcal{B}_{Q}$ of the parity-CHSH inequality satisfies
		\begin{align}\label{eq3}
			\mathcal{B}_{Q}
			\equiv\underset{A_{x},B_{y},C_{0}}{\max} \left | \mathcal{B}^{(3)}_{\rho } \right | \leqslant\underset{\vec{c}}{\max}\sqrt{\lambda_{1}^{2}+\lambda_{2}^{2}},
		\end{align}
		where $\mathcal{B}_{\rho }={\rm Tr}(\mathcal{B}\rho)$, and $\lambda_{1}$, $\lambda_{2}$ represent the maximum singular value of the matrices $\sum_{k=1}^{3}c_{k}T_{k}$ and $T_{0}$, respectively. Here $\sum_{k=1}^{3}c_{k}T_{k}$ has entries $(T_{k})_{ij}=t_{ijk}$, with $c_{k}$ being the entry of unit vector $\vec{c}$ , while $T_{0}$ has entries defined as $(T_{0})_{ij}=t_{ij0}$. 
	\end{theorem}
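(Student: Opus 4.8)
The plan is to reduce $\mathcal{B}_{\rho}$ to a sum of two bilinear forms in the measurement direction vectors and then invoke the Lemma twice. First I would expand each correlator against the Bloch decomposition \eqref{eq2}. Writing each observable as $G=\vec{g}\cdot\vec{\sigma}$ and using $\mathrm{Tr}(\sigma_{i}\sigma_{l})=2\delta_{il}$ together with $\mathrm{Tr}(\sigma_{k})=2\delta_{k0}$, the three-body term collapses to $\langle A_{1}B_{y}C_{0}\rangle=\sum_{i,j,k=1}^{3}t_{ijk}(a_{1})_{i}(b_{y})_{j}c_{k}$, while the two-body term (where Charlie effectively measures the identity) collapses to $\langle A_{0}B_{y}\rangle=\sum_{i,j=1}^{3}t_{ij0}(a_{0})_{i}(b_{y})_{j}$. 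Introducing $T(\vec{c})\equiv\sum_{k=1}^{3}c_{k}T_{k}$ and substituting $\vec{b}_{\pm}=\tfrac{1}{2}(\vec{b}_{0}\pm\vec{b}_{1})$ gives the compact form
\begin{equation*}
	\mathcal{B}_{\rho}=\vec{a}_{1}^{\,T}T(\vec{c})\,\vec{b}_{-}+\vec{a}_{0}^{\,T}T_{0}\,\vec{b}_{+}.
\end{equation*}

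Next I would exploit the CHSH-type norm identity. Since $\vec{b}_{0},\vec{b}_{1}$ are unit vectors, a direct expansion gives $|\vec{b}_{+}|^{2}+|\vec{b}_{-}|^{2}=\tfrac{1}{4}(|\vec{b}_{0}+\vec{b}_{1}|^{2}+|\vec{b}_{0}-\vec{b}_{1}|^{2})=1$, so I may set $|\vec{b}_{-}|=\sin\theta$ and $|\vec{b}_{+}|=\cos\theta$ for some angle $\theta$. Applying the Lemma to each term separately, with $\lambda_{1}$ the largest singular value of $T(\vec{c})$ and $\lambda_{2}$ that of $T_{0}$, and using $|\vec{a}_{0}|=|\vec{a}_{1}|=1$, I obtain $|\vec{a}_{1}^{\,T}T(\vec{c})\,\vec{b}_{-}|\le\lambda_{1}\sin\theta$ and $|\vec{a}_{0}^{\,T}T_{0}\,\vec{b}_{+}|\le\lambda_{2}\cos\theta$. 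The triangle inequality then yields $|\mathcal{B}_{\rho}|\le\lambda_{1}\sin\theta+\lambda_{2}\cos\theta$, and the elementary bound $\alpha\sin\theta+\beta\cos\theta\le\sqrt{\alpha^{2}+\beta^{2}}$ eliminates $\theta$ to give $|\mathcal{B}_{\rho}|\le\sqrt{\lambda_{1}^{2}+\lambda_{2}^{2}}$. Since $\lambda_{2}$ is fixed by the state while $\lambda_{1}=\lambda_{1}(\vec{c})$ retains a dependence on Charlie's direction, maximizing the remaining freedom over the unit vector $\vec{c}$ produces exactly the stated bound $\mathcal{B}_{Q}\le\max_{\vec{c}}\sqrt{\lambda_{1}^{2}+\lambda_{2}^{2}}$.

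I expect the main subtlety to be the bookkeeping in the second step: the two bilinear terms share the same $B$-observables through $\vec{b}_{\pm}$, so the bound must be funneled through the single scalar $\theta$ rather than bounding each term by its own maximal norm, otherwise one overcounts and loses the tight $\sqrt{\lambda_{1}^{2}+\lambda_{2}^{2}}$ combination. Care is also needed because $\vec{b}_{\pm}$ are generally not unit vectors, so the only admissible constraint is the norm identity $|\vec{b}_{+}|^{2}+|\vec{b}_{-}|^{2}=1$; the orthogonality relation $\vec{b}_{+}\cdot\vec{b}_{-}=0$ is not required for the upper bound but will govern when equality (saturation) can be attained.
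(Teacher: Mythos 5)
Your proposal is correct and follows essentially the same route as the paper: expand the correlators in the Bloch picture to get $\langle ABC\rangle=\vec a^{\,T}(\sum_k c_kT_k)\vec b$ and $\langle AB\rangle=\vec a^{\,T}T_0\vec b$, parametrize $\vec b_\pm$ by a single angle via $|\vec b_+|^2+|\vec b_-|^2=1$ (the paper writes $\vec b\pm\vec b'=2\cos\theta_b\,\vec m,\ 2\sin\theta_b\,\vec m'$, which is the same thing), apply the Lemma to each bilinear term, and finish with $\lambda_1\sin\theta+\lambda_2\cos\theta\le\sqrt{\lambda_1^2+\lambda_2^2}$ before maximizing over $\vec c$. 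Your closing remarks on the shared $\theta$ and on $\vec b_\pm$ not being unit vectors match the paper's handling exactly, so no gap to report.
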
	
	\begin{proof}[proof]
		One computes that the mean value of ABC and AB to be,
		\begin{align}\label{eq4}
			\left \langle ABC\right \rangle
			&=\frac{1}{8}\sum_{i,j,k=1}^{3}t_{ijk}a_{i}b_{j}c_{k}{\rm Tr}(\sigma _{i}^{2}\otimes\sigma _{j}^{2}\otimes\sigma _{k}^{2}) \nonumber\\
			&=\sum_{i,jk=1}^{3}t_{ijk}a_{i}b_{j}c_{k} 
			=\left \langle \vec{a},\sum_{k=1}^{3}c_{k}T_{k}\vec{b} \right \rangle  \nonumber \\
			\left \langle AB\right \rangle
			&=\frac{1}{8}\sum_{i,j=1}^{3}t_{ij0}a_{i}b_{j}{\rm Tr}(\sigma _{i}^{2}\otimes\sigma _{j}^{2}\otimes I^{2}) \nonumber\\
			&=\sum_{i,j=1}^{3}t_{ij0}a_{i}b_{j} 
			=\left \langle \vec{a},T_{0}\vec{b} \right \rangle
		\end{align}
		where  $\left ( T_{0} \right )_{ij}=t_{ij0}$ and $T=(T_{1},T_{2},T_{3})$, $\left ( T_{k} \right ) _{ij}=t_{ijk}, k=1,2,3$ ,
		\begin{gather*}T_{k}=\begin{pmatrix}
				t_{11k} &t_{12k}  &t_{13k} \\ 
				t_{21k} &t_{22k}  &t_{23k} \\ 
				t_{31k} &t_{32k}  &t_{33k} \end{pmatrix}.
		\end{gather*}
		
		For any given unit vectors $\vec{b}$ and $\vec{b^{'}}$, there always exist a pair of unit vectors $\vec{m}$ and $\vec{m^{'}}$ that are mutually orthogonal, along with an angle $\theta_{b}\in[0,\frac{\pi }{2}]$ such that
		\begin{gather}\label{eq5}
			\vec{b}+\vec{b^{'}}=2\cos\theta_{b}\vec{m}, \ \ \vec{b}-\vec{b^{'}}=2\sin\theta_{b}\vec{m^{'}},
		\end{gather}
		with the measurement settings defined as  $M_{0}=\vec{m}\cdot\vec{\sigma}$, and $M_{1}=\vec{m^{'}}\cdot\vec{\sigma}$, the expectation value $\mathcal{B}_{\rho }$ for any tripartite quantum state can be expressed as
		\begin{align} 
			\mathcal{B}_{\rho }
			&=\left \langle A_{1}B_{-}C_{0} \right \rangle+\left \langle A_{0}B_{+} \right \rangle \nonumber\\
			&=\frac{1}{2}\left \langle A_{1}(B_{0}-B_{1})C_{0}\right \rangle
			+\frac{1}{2}\left \langle A_{0}(B_{0}+B_{1}) \right \rangle  \nonumber\\
			&{=\left \langle A_{1}M_{1}C_{0}\right \rangle\sin\theta_{b}
			+\left \langle A_{0}M_{0} \right \rangle \cos\theta_{b}, \label{eq6} }
		\end{align}
		by substituting ($\ref{eq4}$) into the above expression, we can conclude from the lemma that
		\begin{align}
			\mathcal{B} _{\rho }
			&\leqslant \lambda_{1} \left \| \vec{a^{'}} \right \|\left \| \vec{m^{'}} \right \| \sin\theta_{b} +\lambda_{2} \left \| \vec{a} \right \|\left \| \vec{m} \right \| \cos\theta_{b} \label{eq7}\\
			&\leqslant\sqrt{\lambda_{1}^{2}+\lambda_{2}^{2}}\label{eq8},
		\end{align}
		which proofs the theorem.
	\end{proof}
	
	Now, let's analyze the conditions that ($\ref{eq3}$) must satisfy to achieve a tight upper bound. For the first inequality ($\ref{eq7}$), by appropriately choosing the measurement directions of $A_{i}$, $B_{j}$ and $C_{0}$ can make it become an equality. According to the lemma, if $\vec{a^{'}},\vec{m^{'}}$ and $\vec{a},\vec{m}$ are the singular vectors corresponding to singular values $\lambda_{1}$ and $\lambda_{2}$, respectively, then by applying Cauchy-Schwarz inequality, inequality ($\ref{eq8}$) is saturated, which means that the conditions for equality are satisfied.
	
	In the following, we will conduct an analysis of the upper bound's tightness and randomness through an example presented below.
	\begin{example}
		We consider a bipartite entangled state of mixed	white noise, which is given by
		\begin{gather*}
			\rho = p|\psi \rangle\langle \psi |+\frac{1-p}{8}I,
		\end{gather*}	
		where $I$ is identity matrix on $\mathbb{C}^{2} \otimes \mathbb{C}^{2}\otimes \mathbb{C}^{2}$, $|\psi\rangle=\frac{1}{\sqrt{2}}(|000\rangle+|110\rangle)$, and $0{\leqslant} p{\leqslant} 1$.
		
		The matrix $\sum_{k=1}^{3}c_{k}T_{k}$, $T_{0}$ take the form of
		\begin{gather*}
			\sum_{k=1}^{3}c_{k}T_{k}=
			\begin{pmatrix}
				0  &0  &\ \ \ c_{1}p \\
				0  &0  &-c_{2}p \\
				0  &0  &\ \ \ c_{3}p
			\end{pmatrix}, \\
			T_{0}=			
			\begin{pmatrix}
				p  &0  &0 \\
				0  &-p  &0 \\
				0  &0  &p
			\end{pmatrix}.
		\end{gather*} 
		Thus, the maximum singular values of the matrix $\sum_{k=1}^{3}c_{k}T_{k}$ and $T_{0}$ are
		\begin{gather*}
			\lambda_{\max}\left( \sum_{k=1}^{3}c_{k}T_{k}\right) =\sqrt{(c_{1}^{2}+c_{2}^{2}+c_{3}^{2})p^{2}}=p, \ \
			\lambda _{\max}\left( T_{0}\right) =p.
		\end{gather*}
		Therefore, 
		\begin{align*}
			\mathcal{B}_{Q} 
			&=\underset{A_{x},B_{y},C_{0}}{\max} \left | \mathcal{B}_{\rho} \right |
			=\sqrt{\lambda_{1}^{2}+\lambda_{2}^{2}}
			=\sqrt{2}p.
		\end{align*} 
		
		There exists a unique strategy as shown in ($\ref{eq10}$), which can attain the tight of upper bound of the maximum quantum value, specifically $\sqrt{2}p$. Next, we will  investigate the randomness generated by this strategy.
		To achieve this, it is essential to fulfill  the conditions $\left \langle \vec{m},\vec{m^{'}} \right \rangle =0 $ and $\theta_{b}=\frac{\pi}{4}$, the singular vectors associated with $\sum_{k=1}^{3}c_{k}T_{k}$ and $T_{0}$ can be selected as
		\begin{gather}\label{eq9}
			\vec{a}=\vec{m}=(1,0,0)^{T} ,\ \
			\vec{a^{'}}=\vec{m^{'}}=(0,0,1)^{T}.
		\end{gather}
		By referencing ($\ref{eq5}$) and ($\ref{eq9}$), we can obtain
		\begin{gather*}
			\vec{b}+\vec{b^{'}}={2\cos\theta_{b}\vec{m}}=\left(\sqrt{2},0,0\right)^{T},
			\\
			\vec{b}-\vec{b^{'}}={2\sin\theta_{b}\vec{m^{'}}}=\left(0,0,\sqrt{2} \right) ^{T}.
		\end{gather*}
		The measurement settings are as follows
		\begin{gather}
			\vec{a}=(1,0,0)^{T},\ \
			\vec{a^{'}}=(0,0,1)^{T},  \nonumber\\
			\vec{b}=\frac{\sqrt{2}}{2}\left(1,0,1\right) ^{T}, \
			\vec{b^{'}}=\frac{\sqrt{2}}{2}\left(1,0,-1\right) ^{T},  \nonumber\\
			\vec{c}=(c_{1},c_{2},c_{3})^{T}.\label{eq10}
		\end{gather}
		Without sacrificing generality, we choose the following measurements for each party, 
		\begin{gather*}
			A_{0}=\sigma_{1},\ A_{1}=\sigma_{3}, \\
			B_{0}=\frac{\sqrt{2}}{2}\left( \sigma_{1}+\sigma_{3}\right), \
			B_{1}=\frac{\sqrt{2}}{2}\left( \sigma_{1}-\sigma_{3}\right),\\
			C_{0}=c_{1}\sigma_{1}+c_{2}\sigma_{2}+c_{3}\sigma_{3} .
		\end{gather*}
		The parity-CHSH inequality self-tests the measurements for the state $|\psi\rangle$ 
		which achieves its maximum quantum violation.
		All $A_{x},B_{y},C_{0}$ with eigenvalues $\pm1$, the corresponding projection measurements for eigenvalue can be expressed as $M_{a|x}=\frac{I\pm A_{x}}{2}$, $M_{b|y}=\frac{I\pm B_{y}}{2}$, $M_{c|0}=\frac{I\pm C_{0}}{2}$, 		
		\begin{gather}
			M_{1|0}^{A} =\frac{1}{2}\begin{pmatrix}
				1  &1 \\
				1  &1
			\end{pmatrix},\
			M_{-1|0}^{A} =\frac{1}{2}\begin{pmatrix}
				1  &-1 \\
				-1  &1
			\end{pmatrix}, \nonumber\\
			M_{1|1}^{A} =\frac{1}{2}\begin{pmatrix}
				1  &0 \\
				0  &0
			\end{pmatrix},\
			M_{-1|1}^{A} =\frac{1}{2}\begin{pmatrix}
				0  &0 \\
				0  &1
			\end{pmatrix}, \nonumber\\
			M_{1|0}^{B} =\frac{\sqrt{2}}{4}\begin{pmatrix}
				\sqrt{2}+1  &1 \\
				1  &\sqrt{2}-1
			\end{pmatrix},  \nonumber\\
			M_{-1|0}^{B} =\frac{\sqrt{2}}{4}\begin{pmatrix}
				\sqrt{2}-1  &-1 \\
				-1  &\sqrt{2}+1
			\end{pmatrix}, \nonumber\\
			M_{1|1}^{B} =\frac{\sqrt{2}}{4}\begin{pmatrix}
				\sqrt{2}-1  &1 \\
				1  &\sqrt{2}+1
			\end{pmatrix},  \nonumber\\
			M_{-1|1}^{B} =\frac{\sqrt{2}}{4}\begin{pmatrix}
				\sqrt{2}+1  &-1 \\
				-1  &\sqrt{2}-1
			\end{pmatrix}, \nonumber\\
			M_{1|0}^{C} =\frac{1}{2}\begin{pmatrix}
				1+c_{3}  &c_{1}-c_{2}i \\
				c_{1}+c_{2}i  &1-c_{3} 
			\end{pmatrix},\nonumber\\
			M_{-1|0}^{C} =\frac{1}{2}\begin{pmatrix}
				1-c_{3}  &-c_{1}+c_{2}i \\
				-c_{1}-c_{2}i  &1+c_{3} 
			\end{pmatrix}. \label{eq11}
		\end{gather}
		
		The corresponding quantum correlation is generated by the Born rule: $P(a,b,c|x,y,z)={\rm Tr}\left(\rho M_{a|x}\otimes M_{b|y}\otimes M_{c|z} \right) $, $a,b,c\in \left\lbrace 1,-1\right\rbrace$, $x,y\in \left\lbrace 0,1\right\rbrace, z=0 $. This leads to a total of 32 joint probabilities. Due to the selection of measurement operators, there may be instances where these probabilities are equal. We can classify these equal probabilities into categories for comparison and identify the largest one among them,
		\begin{gather*}
			P_{1}=\frac{(2-\sqrt{2})c_{3}-\sqrt{2} }{16}p+\frac{1}{8},\\
			P_{2}=\frac{-(2+\sqrt{2})c_{3}+\sqrt{2} }{16}p+\frac{1}{8}, \\
			P_{3}=\frac{(-2+\sqrt{2})c_{3}-\sqrt{2} }{16}p+\frac{1}{8}, \\
			P_{4}=\frac{(2+\sqrt{2})c_{3}+\sqrt{2} }{16}p+\frac{1}{8}.
		\end{gather*}
		It is evident that $P_{1},P_{2},P_{3} {\leqslant} P_{4}$. Hence, regarding parity-CHSH inequality, the maximum probability achievable among all probability distributions is
		\begin{gather}\label{eq12}
			\underset{\left \{ M_{a|x},M_{b|y},M_{c|z} \right \} }{\max} P(abc|xyz)=\frac{1+\sqrt{2}}{8}p+\frac{1}{8}.
		\end{gather}
	\end{example}
	
	\section{RANDOMNESS BASED ON THE PARITY-CHSH INEQUALITY}
	The violation of Bell inequality provides evidence for the existence of quantum entanglement and, in a DI environment, demonstrates the existence of randomness. Owing to the uncertainty principle of quantum mechanics, quantum systems possess inherent randomness that can be harnessed to generate random numbers and violate Bell inequality. Entropy quantifies the randomness within a system and is pivotal in applications such as random number generation.
	
	In this work, we employ the min-entropy
	\begin{align*}
		H_{\min}=-\log_{2}{\max_{abc}P(a,b,c|x,y,z)},
	\end{align*}	
	to quantify the randomness of output pairs $(a,b,c)$ limited by the input pairs $(x,y,z)$.
	For a given violation value I of the Bell inequality, the problem of solving the maximum probability can be reformulated as the following optimization problem:
	\begin{align}\label{eq13}
		\max_{a,b,c} \quad 
		& P(abc|xyz) \nonumber\\
		\mbox{s.t.} \quad
		&\sum_{a,b,c,x,y,z}\alpha_{abc|xyz}P(abc|xyz)=I \nonumber\\
		&P(abc|xyz)={\rm Tr}\left(\rho_{abc}M_{a|x}\otimes M_{b|y}\otimes M_{c|z} \right).
	\end{align}	
	However, the carving of quantum correlation sets is challenging. Obtaining $\max_{abc}P(a,b,c|x,y,z)$ requires substantial computational effort and necessitates an exhaustive search through all $\left\lbrace \rho,M_{a|x},M_{b|y},M_{c|z} \right\rbrace $. 
	The proposal of NPA hierarchy has become an important tool for studying DI protocols.   
	This technology allows us to characterize the set of quantum correlations by using a series of tight external approximations, each of which can be represented algorithmically as a set of feasible solutions to a SDP problem.
	
	As shown in FIG. $\ref{Figure1}$, a numerical search for the solution to the optimization problem yields an upper bound on the maximum probability, also referred to as the lower bound of the min-entropy at the $Q_{aq}$ level of the NPA hierarchy, as discussed in Ref. \cite{NPA3}. 
	Under the condition of the tight upper bound of the maximum violation value of the parity-CHSH inequality, we derived a lower bound of the maximum probability ($\ref{eq12}$) based on a bipartite entangled state within tripartite system, which acts as an upper bound on randomness.
	When p approaches 1, the amount of
	random bit can reach up to 1.2284. This result is consistent with the randomness obtained through the NPA algorithm, which suggests that we have found the exact solution for the minimum randomness.
	Furthermore, we observe when the noise level satisfies  $0.7071\approx1/\sqrt{2}<p {\leqslant} 1$, randomness can be extracted.
	
	\begin{figure}[H]
		\centering
		\includegraphics[width=0.4\textwidth]{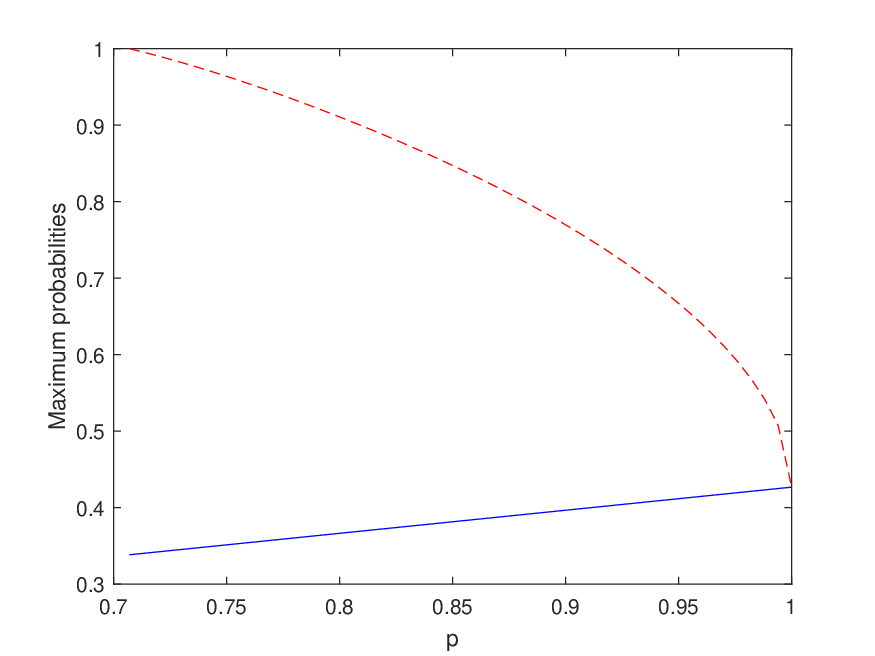}
		\caption{The maximum probabilities $P(abc|xyz)$ vs visibility p, optimally violates the parity-CHSH inequality in the presence of white noise. The blue curve is the lower bound of the maximum probability obtained under the measurement described in ($\ref{eq12}$). The red dashed curve represents the upper bound of the maximum probability obtained at the $Q_{aq}$ level of NPA hierarchy. When p=1, which signifies the absence of noise effects, the lower and upper bounds for the maximum probability are both 0.4268. } \label{Figure1}
	\end{figure}

	\section{monogamy relations for four-qubit GHZ-class states}
	We will now consider the monogamy relation of the parity-CHSH inequality for the four-qubit the GHZ-class states $|\psi_{gs}\rangle$. The corresponding density matrices are defined as $\rho_{abcd}=|\psi_{gs}\rangle\langle \psi_{gs}|$,
	\begin{gather*}
		|\psi_{gs}\rangle=\cos\theta|0000\rangle+\sin\theta|111\rangle\left(\cos\varPhi |0\rangle+\sin\varPhi |1\rangle \right) .
	\end{gather*}	
	The reduced three-qubit density matrices derived from $\rho_{abcd}$ are labeled as $\rho_{abc}={\rm Tr}_{d}\left(\rho_{abcd}\right) $, $\rho_{abd}={\rm Tr}_{c}\left(\rho_{abcd}\right) $, $\rho_{acd}={\rm Tr}_{b}\left(\rho_{abcd}\right) $ and $\rho_{bcd}={\rm Tr}_{a}\left(\rho_{abcd}\right) $.
	\begin{widetext}
		\begin{theorem}
			\label{thm2}
			For the four-qubit GHZ-class states $\rho_{abcd}=|\psi_{gs}\rangle\langle \psi_{gs}|$, the monogamy relation between  the quantum value of the parity-CHSH inequality for reduced tripartite states are satisfied as follows:
			\begin{align} \label{eq14}
				\mathcal{B}^{2} _{\rho_{abc}}+\mathcal{B}^{2} _{\rho_{abd}}+\mathcal{B}^{2} _{\rho_{acd}}+\mathcal{B}^{2} _{\rho_{bcd}}
				\leqslant 4 .
			\end{align}
			where,
			\begin{align*}
				\rho_{abc}
				&=\cos^{2}\theta |000\rangle\langle 000| +\sin^{2}\theta |111\rangle\langle 111|  +\cos\theta\sin\theta\cos\varPhi\left(|000\rangle\langle 111|+|111\rangle\langle 000|\right), \\
				\rho_{abd} &=\rho_{acd}=\rho_{bcd} \\
				&=\cos^{2}\theta |000\rangle\langle 000| +\sin^{2}\theta\sin^{2}\varPhi |111\rangle\langle 111| \\
				&\quad+\sin^{2}\theta\cos^{2}\varPhi |110\rangle\langle 110|+\sin^{2}\theta\sin\varPhi\cos\varPhi \left( |110\rangle\langle 111|+|111\rangle\langle 110| \right) .
			\end{align*}	
		\end{theorem}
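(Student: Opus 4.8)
The plan is to reduce the four-term sum to two independent quantum-value computations and then apply Theorem~\ref{thm1} to each. First I would note that tracing out any one of the first three qubits leaves the same state, so $\rho_{abd}=\rho_{acd}=\rho_{bcd}$ and the left-hand side of (\ref{eq14}) collapses to $\mathcal{B}^{2}_{\rho_{abc}}+3\,\mathcal{B}^{2}_{\rho_{abd}}$. It therefore suffices to bound the two inequivalent quantum values $\mathcal{B}_{\rho_{abc}}$ and $\mathcal{B}_{\rho_{abd}}$.

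Next I would read off the correlation tensors from the explicit density matrices. For the GHZ-type state $\rho_{abc}$ the surviving entries are the longitudinal $t_{333}=\cos 2\theta$ and the flip entries $t_{111}=-t_{122}=-t_{212}=-t_{221}=\sin 2\theta\cos\varPhi$, together with the two-body block $T_{0}$; for $\rho_{abd}$ the coherence lives on the last qubit, leaving $t_{333}=\cos^{2}\theta+\sin^{2}\theta\cos 2\varPhi$ and $t_{331}=\sin^{2}\theta\sin 2\varPhi$. With these I would build the matrices $\sum_{k}c_{k}T_{k}$ and $T_{0}$ for each state, compute their largest singular values as functions of $\vec{c}$, and perform the maximization over the unit vector $\vec{c}$ that Theorem~\ref{thm1} prescribes.

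Carrying this through, I expect $\mathcal{B}^{2}_{\rho_{abc}}\leqslant \sin^{2}2\theta\cos^{2}\varPhi+\cos^{2}2\theta$ and $\mathcal{B}^{2}_{\rho_{abd}}\leqslant 1-\sin^{2}2\theta\sin^{2}\varPhi$. Substituting into $\mathcal{B}^{2}_{\rho_{abc}}+3\,\mathcal{B}^{2}_{\rho_{abd}}$ and using $\cos^{2}2\theta=1-\sin^{2}2\theta$ and $\cos^{2}\varPhi=1-\sin^{2}\varPhi$, the sum should collapse to $4-4\sin^{2}2\theta\sin^{2}\varPhi$, which is at most $4$, with a monogamy deficit that vanishes precisely when $\sin 2\theta\sin\varPhi=0$. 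This final algebraic simplification is the clean endpoint of the proof.

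The step I expect to be the main obstacle is the singular-value analysis, not the closing trigonometry. One must optimize $\lambda_{\max}\!\left(\sum_{k}c_{k}T_{k}\right)$ over $\vec{c}$ and combine it with $\lambda_{\max}(T_{0})$ while respecting the saturation conditions of Theorem~\ref{thm1}; the two contributions cannot be maximized independently, since the three-body weight $\sin 2\theta\cos\varPhi$ and the longitudinal weight $\cos 2\theta$ compete for the same measurement directions. Tracking this $\vec{c}$-dependence faithfully for all four reduced states—rather than bounding each piece in isolation—is exactly what forces the total down to $4$ and produces the monogamy deficit, so I would be careful to verify that the saturation directions for the different reduced states are mutually incompatible.
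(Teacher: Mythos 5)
There is a genuine gap here: the declared strategy of reducing (\ref{eq14}) to ``two independent quantum-value computations'' cannot work, because the inequality is false under that reading. Take $\theta=\pi/4$, $\varPhi=0$, so $|\psi_{gs}\rangle=\tfrac{1}{\sqrt2}(|0000\rangle+|1110\rangle)$. Then $\rho_{abc}$ is the pure three-qubit GHZ state, whose independently maximized parity-CHSH value is $\sqrt2$, i.e.\ $\mathcal{B}^2_{\rho_{abc}}=2$; and each of $\rho_{abd}=\rho_{acd}=\rho_{bcd}=\tfrac12\left(|000\rangle\langle000|+|110\rangle\langle110|\right)$ is separable yet still attains the classical value $1$ (take $A_0=B_0=B_1=\sigma_3$, so $\langle A_0B_{+}\rangle=t_{330}=1$). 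The independently maximized sum is then $2+3=5>4$. The relation only holds because the four Bell expressions are evaluated with shared settings for the common parties: the paper's proof writes every observable in spherical angles, notes that $\theta_{a'}$, $\theta_{m_b}$, $\theta_{m'_b}$, $\theta_{m_c}$, $\theta_{m'_c}$ each appear in several of the four terms, and uses the orthogonality forced by (\ref{eq5}), namely $\vec{m}_y\cdot\vec{m}^{\,\prime}_y=0$ hence $\cos^{2}\theta_{m_y}+\cos^{2}\theta_{m'_y}\leqslant1$, as the trade-off that caps the total at $4$. Your closing remark that the saturation directions ``cannot be maximized independently'' gestures at exactly this, but it contradicts your stated plan and is not something that can be bolted on after bounding each term in isolation.

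Your intermediate bounds are also not what Theorem~\ref{thm1} delivers. For every one of these reduced states $t_{330}=1$, so $\lambda_2=\lambda_{\max}(T_0)\geqslant1$, and Theorem~\ref{thm1} gives $\mathcal{B}^2_{\rho_{abc}}\leqslant1+\max\left(\sin^{2}2\theta\cos^{2}\varPhi,\ \cos^{2}2\theta\right)$ rather than $\sin^{2}2\theta\cos^{2}\varPhi+\cos^{2}2\theta$; your version would give $\mathcal{B}^2_{\rho_{abc}}\leqslant1$ for the GHZ case, contradicting the known maximal violation $\sqrt2$. Likewise $\mathcal{B}^2_{\rho_{abd}}\leqslant1+\lambda_1^2\geqslant1$, not $1-\sin^{2}2\theta\sin^{2}\varPhi$, which would force $\mathcal{B}_{\rho_{abd}}=0$ at $\varPhi=\pi/2$, $\theta=\pi/4$ even though $\langle A_0B_{+}\rangle=1$ is attainable there. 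Summing the four correct Theorem-\ref{thm1} bounds yields a number between $4$ and $8$, so this route cannot reach the claim. The expressions $\cos^{2}2\theta+\sin^{2}2\theta\cos^{2}\varPhi$ and $1-\sin^{2}2\theta\sin^{2}\varPhi$ do occur in the paper's proof, but only as coefficients of $\cos^{2}\theta_{a'}\cos^{2}\theta_{m'_b}$ (resp.\ $\cos^{2}\theta_{m'_c}$) in the three-body part of each term, sitting alongside the two-body contributions $\cos^{2}\theta_{m_b}$ and $\cos^{2}\theta_{m_c}$; the bound of $4$ emerges from balancing these against one another through the shared angles, not from the trigonometric identity you invoke at the end.
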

		\begin{proof}[proof]
			Noting $\vec{x}=\left( \sin\theta_{x}\cos\varPhi_{x}, \sin\theta_{x}\sin\varPhi_{x}, \cos\theta_{x} \right) $, $\vec{x}\in \left \{ a,a^{'} ,b,b^{'},c,c^{'},d,d^{'}\right \}$, we find that, according to Eq. ($\ref{eq5}$), for $\vec{y}\in \left \{ b,c\right \}$, $\vec{m_{y}}$ and $\vec{m_{y}^{'}}$ satisfy orthogonality. This imply that
			\begin{align*} 
				\vec{m_{y}}\cdot\vec{m_{y}^{'}}
				&=\cos\theta_{m_{y}}\cos\theta_{m_{y}^{'}} +\sin\theta_{m_{y}}\sin\theta_{m_{y}^{'}}\cos\left( \varPhi_{m_{y}}-\varPhi_{m_{y}^{'}} \right) =0,
			\end{align*}
			by further derivation we find that $\cos^{2}\theta_{m_{y}}+\cos^{2}\theta_{m_{y}^{'}} {\leqslant} 1$, while $\sin^{2}\theta_{m_{y}}+\sin^{2}\theta_{m_{y}^{'}} {\leqslant} 2$. 
			On the basis of Eq. ($\ref{eq6}$), for the reduced state $\rho_{abc}$ 
			\begin{align} \label{eq15}
				\mathcal{B}^{2} _{\rho_{abc}} 
				&\leqslant \left \langle A_{1}M_{b_{1}}C_{0}\right\rangle^{2}
				+\left \langle A_{0}M_{b_{0}}\right \rangle^{2}.
			\end{align} 
			It is easy to calculate $\left \langle A_{1}M_{b_{1}}C_{0}\right \rangle^{2}$ and $\left \langle A_{0}M_{b_{0}}\right \rangle^{2}$,
			
			\begin{align*} 
				\left \langle A_{0}M_{b_{0}}\right \rangle^{2}_{\rho_{abc}}
				&=\cos^{2}\theta_{a}\cos^{2}\theta_{m_{b}}, \\
				\left \langle A_{1}M_{b_{1}}C_{0}\right \rangle^{2}_{\rho_{abc}	}
				&= \left[\cos\theta_{a^{'}}\cos\theta_{c}\cos\theta_{m^{'}_{b}}\cos2\theta
				+\sin\theta_{a^{'}}\sin\theta_{c}\sin\theta_{m^{'}_{b}}\cos \left(\varPhi_{a^{'}}+\varPhi_{c}+\varPhi_{m^{'}_{b}}\right) \sin2\theta\cos\varPhi    \right] ^{2}  .
			\end{align*}	
			Therefore, we obtian
			\begin{align*} 
				\mathcal{B}^{2} _{\rho_{abc}} 
				&\leqslant \cos^{2}\theta_{a}\cos^{2}\theta_{m_{b}}
				+\left[ \cos\theta_{a^{'}}\cos\theta_{c}\cos\theta_{m^{'}_{b}}\cos2\theta
				+\sin\theta_{a^{'}}\sin\theta_{c}\sin\theta_{m^{'}_{b}} \sin2\theta\cos\varPhi    \right] ^{2} \nonumber\\
				&\leqslant\cos^{2}\theta_{m_{b}}
				+\cos^{2}\theta_{a^{'}}\cos^{2}\theta_{m^{'}_{b}}\cos^{2}2\theta
				+\sin^{2}\theta_{a^{'}}\sin^{2}\theta_{m^{'}_{b}} \sin^{2}2\theta\cos^{2}\varPhi  .
			\end{align*} 
			
			For the reduced states $\rho_{abd}$, $\rho_{acd}$ and $\rho_{bcd}$, similar calculations can be performed as follows:
			\begin{align*} 
				\mathcal{B}^{2} _{\rho_{abd}}
				&\leqslant  \left \langle A_{0}M_{b_{0}}\right \rangle^{2}
				+\left \langle A_{1}M_{b_{1}}D_{0}\right\rangle^{2}   \\
				&=\cos^{2}\theta_{a}\cos^{2}\theta_{m_{b}}
				+\cos^{2}\theta_{a^{'}}\cos^{2}\theta_{m^{'}_{b}}
				\left[ \sin\theta_{d^{'}}\cos\varPhi_{d^{'}}\sin^{2}\theta\sin 2\varPhi+\cos\theta_{d^{'}}\left(1-2\sin^{2}\theta\sin^{2}\varPhi \right)  \right] ^{2}  \\
				&\leqslant  \cos^{2}\theta_{m_{b}}
				+\cos^{2}\theta_{a^{'}}\cos^{2}\theta_{m^{'}_{b}} \left( 1-\sin^{2}2\theta\sin^{2}\varPhi \right),\\
				\mathcal{B}^{2} _{\rho_{acd}}
				&\leqslant  \cos^{2}\theta_{m_{c}}
				+\cos^{2}\theta_{a^{'}}\cos^{2}\theta_{m^{'}_{c}} \left( 1-\sin^{2}2\theta\sin^{2}\varPhi \right) , \\
				\mathcal{B}^{2} _{\rho_{bcd}}
				&\leqslant  \cos^{2}\theta_{m_{c}}
				+\cos^{2}\theta_{b^{'}}\cos^{2}\theta_{m^{'}_{c}} \left( 1-\sin^{2}2\theta\sin^{2}\varPhi \right) 
				\leqslant  \cos^{2}\theta_{m_{c}}
				+\cos^{2}\theta_{m^{'}_{c}} \left( 1-\sin^{2}2\theta\sin^{2}\varPhi \right)		 .
			\end{align*}	
			As a result, we can obtain
			\begin{align*}
				\sum_{a\leqslant x<y<z\leqslant d}\mathcal{B}^{2} _{\rho_{xyz}}=
				&\mathcal{B}^{2} _{\rho_{abc}}+\mathcal{B}^{2} _{\rho_{abd}}+\mathcal{B}^{2} _{\rho_{acd}}+\mathcal{B}^{2} _{\rho_{bcd}} \\
				&\leqslant
				2\cos^{2}\theta_{m_{b}}
				+\sin^{2}\theta_{m^{'}_{b}}\sin^{2}\theta_{a^{'}}\sin^{2}2\theta\cos^{2}\varPhi  
				+\cos^{2}\theta_{m^{'}_{b}}\cos^{2}\theta_{a^{'}} \left(\cos^{2}2\theta+1-\sin^{2}2\theta\sin^{2}\varPhi\right) \\
				&\quad+2\cos^{2}\theta_{m_{c}}
				+\cos^{2}\theta_{m^{'}_{c}}\cos^{2}\theta_{a^{'}}\left(1-\sin^{2}2\theta\sin^{2}\varPhi\right)
				+\cos^{2}\theta_{m^{'}_{c}}\left(1-\sin^{2}2\theta\sin^{2}\varPhi\right) \\
				&\leqslant	
				2\left(1-\cos^{2}\theta_{m^{'}_{b}}\right) 		
				+\left( 1-\cos^{2}\theta_{m^{'}_{b}} \right)\sin^{2}\theta_{a^{'}}\sin^{2}2\theta\cos^{2}\varPhi  
				+\cos^{2}\theta_{m^{'}_{b}}\cos^{2}\theta_{a^{'}} \left(\cos^{2}2\theta+1-\sin^{2}2\theta\sin^{2}\varPhi\right) \\
				&\quad +2\left(1-\cos^{2}\theta_{m^{'}_{c}}\right)
				+\cos^{2}\theta_{m^{'}_{c}}\cos^{2}\theta_{a^{'}}\left(1-\sin^{2}2\theta\sin^{2}\varPhi\right) +\cos^{2}\theta_{m^{'}_{c}}\left(1-\sin^{2}2\theta\sin^{2}\varPhi\right)\\
				&=4+\cos^{2}\theta_{m^{'}_{b}}\left[ 2\cos^{2}\theta_{a^{'}}\left( 1-\sin^{2}2\theta\sin^{2}\varPhi \right)-2-\sin^{2}2\theta\cos^{2}\varPhi  \right]
				-\cos^{2}\theta_{a^{'}}\sin^{2}2\theta\cos^{2}\varPhi \\
				&\quad
				+\cos^{2}\theta_{m^{'}_{c}}\left[ \left(1-\sin^{2}2\theta\sin^{2}\varPhi\right)\left(1+\cos^{2}\theta_{a^{'}}\right)-2  \right] \\
				&\leqslant	4.
			\end{align*}
			
			The monogamy relation presented in Theorem $\ref{thm2}$ has been demonstrated.
		\end{proof}
	\end{widetext}
	The equality holds when 
	\begin{gather*}
		\theta_{a}=\theta_{b}=\theta_{b^{'}}=\theta_{c}=0, \ \ 
		\varPhi_{a^{'}}+\varPhi_{c}+\varPhi_{m_{b^{'}}}=0, \ \
		\varPhi_{d^{'}}=0,\\
		\theta_{a^{'}}=\theta_{m^{'}_{b}}=\theta_{m^{'}_{c}}=\frac{\pi}{2}, \ \ 
		\cos\theta_{d^{'}}=\frac{1-2\sin^{2}\theta\sin^{2}\varPhi}{\sqrt{1-\sin^{2}2\theta\sin^{2}\varPhi}} .
	\end{gather*}
	
	In general, the maximum quantum value of the parity-CHSH inequality is $\mathcal{B}_{Q}=\sqrt{2}$, suggesting that $\mathcal{B}^{2} _{\rho_{xyz}}\leqslant 2$. However, the right side of Eq. ($\ref{eq14}$) equals 4 rather than 8, indicating that the monogamy relation ($\ref{eq14}$) introduces constraints on how nonlocality is distributed among the subsystems. Consequently, $\mathcal{B}^{2} _{\rho_{xyz}}=2$ cannot be satisfied for any tripartite GHZ-class states $\rho_{xyz}$. 
	This further means it is impossible to simultaneously violate the parity-CHSH inequality. In summary, for all combinations $xyz=abc,abd,acd,bcd$ $\mathcal{B}^{2} _{\rho_{xyz}}>1$ will not hold at the same time.

	\section{CONCLUSION AND DISCUSSION}
	The nonlocality of Bell inequality plays a significant role and meaning in various fields. Therefore, this work investigates its randomness by introducing a method for deriving the maximum violation value of the parity-CHSH inequality for any tripartite quantum state. Our analysis reveals that the upper bound of the maximum violation value is tight, and corresponding constraints are given. Specifically, for bipartite entangled state $|\psi\rangle$ with white noise, we calculate the maximum quantum value of the tripartite parity-CHSH inequality, and obtain the necessary and sufficient conditions for its tight upper bound through self-testing of measurements.
	
	Randomness, as a fundamental characteristic of nature, is the key factor behind DI randomness generation (DIRG). Therefore, we study its randomness and maximum probability by considering violations of the specified Bell inequality. 
	When the parity-CHSH inequality achieves the tight upper bound of the maximum violation value, we derive the corresponding maximum probability, which can be used as an upper bound on the min-entropy for quantifying randomness.
	Without regard to the presence of noise, we obtain the lower bound of DI randomness through the SDP method. The consistency between the upper and lower bounds implies this method allows us to achieve the precise solution for the minimum randomness.
	However, for the maximally entangled state, although it can achieve the maximum quantum value, it cannot reach the lower bound of the maximum probability. Besides this, We also investigate the monogamy relation of genuine tripartite nonlocality within four-qubit systems and a tight upper bound for the corresponding GHZ-class states is given based on the maximum quantum value of the parity-CHSH operator. Our result reveals that in a four-qubit quantum system, tripartite nonlocality is interdependent rather than mutually independent.
	
	Constructing tight upper bounds of Bell inequalities for arbitrary quantum states to certify corresponding randomness constitutes a crucial research direction in foundational studies of quantum nonlocality and randomness generation.
	It is worth further studying whether this method is effective when extended to Bell scenarios with multiple measurement settings or more parties.
	Regarding the monogamy relations of parity-CHSH inequalities, the fundamental question of whether analogous constraints emerge for arbitrary quantum states presents significant theoretical implications. This intriguing problem still needs to be systematically explored, and we will leave it for future.

	\section*{\bf acknowledgments} This work is supported by the Shandong Provincial Natural Science Foundation for Quantum Science ZR2021LLZ002, the Fundamental Research Funds for the Central Universities No.22CX03005A.
	
	\section*{\bf Data availability statement} All data generated or analyzed during this study are included in this published article.


\begin{thebibliography}{99}
		\bibitem{1} J. S. Bell. On the Einstein Podolsky Rosen paradox,
		\href{https://link.aps.org/doi/10.1103/Physics Physique Fizika.1.195}{Physics Physique Fizika. 1, 195 (1964)}
		
		\bibitem{2}J. S. Bell, On the Problem of Hidden Variables in Quantum Mechanics,
		\href{https://link.aps.org/doi/10.1103/RevModPhys.38.447}{Rev. Mod. Phys. 38, 447 (1966)}.
		
		\bibitem{3}N. Brunner, D. Cavalcanti, S. Pironio, V. Scarani, and S. Wehner, Bell nonlocality, 
		\href{https://link.aps.org/doi/10.1103/RevModPhys.86.419}{Rev. Mod. Phys. 86, 419 (2014)}.
		
		\bibitem{4}Y. W. Xiao and X. H. Li and J. Wang and M. Li and S. M. Fei, Device-independent randomness based on a tight upper bound of the maximal quantum value of chained inequality,
		\href{https://link.aps.org/doi/10.1103/PhysRevA.107.052415}{Phys. Rev. A 107, 052415 (2023)}.
		
		\bibitem{5}C. Dhara and G. Prettico and A. Acín, Maximal quantum randomness in Bell tests, 
		\href{https://link.aps.org/doi/10.1103/PhysRevA.88.052116}{Phys. Rev.A 88, 052116 (2013)}.
		
		\bibitem{quantumkeydistribution1}A. K. Ekert, Quantum cryptography based on bell’s theorem, 
		\href{https://link.aps.org/doi/10.1103/PhysRevLett.67.661}{Phys. Rev. Lett. 67, 661(1991)}.
		
		\bibitem{quantumkeydistribution2}Z. Y. Li, Y. C. Zhang, F. H. Xu, X. Peng, and H. Guo, Continuous-variable measurement-device-independent quantum key distribution, 
		\href{https://link.aps.org/doi/10.1103/PhysRevA.89.052301}{Phys. Rev. A 89, 052301 (2014)}.
		
		\bibitem{monogamy1}M. Koashi and A. Winter, Monogamy of quantum entanglement and other correlations,
		\href{https://doi.org/10.1103/PhysRevA.69.022309}{Phys. Rev. A 69, 022309 (2004)}.
		
		\bibitem{monogamy2}T. R. de Oliveira, M. F. Cornelio, and F. F. Fanchini, Monogamy of entanglement of formation,
		\href{https://doi.org/10.1103/PhysRevA.89.034303}{Phys. Rev. A 89, 034303 (2014)}.
		
		\bibitem{monogamy3}M. Pawlowski, Security proof for cryptographic protocols based only on the monogamy of Bell’s inequality violations,
		\href{https://doi.org/10.1103/PhysRevA.82.032313}{Phys. Rev. A 82, 032313 (2010)}.
		
		\bibitem{Parity-CHSH}J. Ribeiro, G. Murta and S. Wehner, Reply to ``Comment on `Fully device-independent conference key agreement' '', 
		\href{https://link.aps.org/doi/10.1103/PhysRevA.100.026302}{Phys. Rev. A 100, 026302 (2019)}.
		
		\bibitem{boost}F. Grasselli, G. Murta, H. Kampermann and D. Bruß, Boosting device-independent cryptography with tripartite nonlocality,
		\href{https://doi.org/10.22331/q-2023-04-13-980}{Quantum 7, 980 (2023)}.
		
		\bibitem{DIQKD1}R. A. Friedman, F. Dupuis, O. Fawzi, R. Renner, and T. Vidick, Practical device-independent quantum cryptography via entropy accumulation,
		\href{https://doi.org/10.1038/s41467-017-02307-4}{Nat Commun 9, 459 (2018)}.
		
		\bibitem{DIQKD2}S. Pironio, A. Acín, N. Brunner, N. Gisin, S. Massar, and V. Scarani, Device-independent quantum key distribution secure against collective attacks, 
		\href{https://doi.org/10.1088/1367-2630/11/4/045021}{New J. Phys. 11, 045021 (2009)}.
		
		\bibitem{DIQKD3}L. Masanes, S. Pironio, and A. Acín, Secure device-independent quantum key distribution with causally independent measurement devices,
		\href{https://doi.org/10.1038/ncomms1244}{Nat Commun 2, 238 (2011)}.
		
		\bibitem{DIQKD4}U. Vazirani and T. Vidick, Fully Device-Independent Quantum Key Distribution,
		\href{https://link.aps.org/doi/10.1103/PhysRevLett.113.140501}{Phys. Rev. Lett. 113, 140501 (2014)}.
		
		\bibitem{DICKA1}V. Scarani and N. Gisin, Quantum Communication between $\mathit{N}$ Partners and Bell's Inequalities,
		\href{https://link.aps.org/doi/10.1103/PhysRevLett.87.117901}{Phys. Rev. Lett. 87, 117901 (2001)}.
		
		\bibitem{DICKA2}V. Scarani and N. Gisin, Quantum key distribution between N partners: Optimal eavesdropping and Bell's inequalities,
		\href{https://link.aps.org/doi/10.1103/PhysRevA.65.012311}{Phys. Rev. A 65, 012311 (2001)}.
		
		\bibitem{DIRE1}E. Woodhead and B. Bourdoncle and A. Acín, Randomness versus nonlocality in the Mermin-Bell experiment with three parties,
		\href{https://doi.org/10.22331/q-2018-08-17-82}{Quantum  2, 82 (2018)}.
		
		\bibitem{DIRE2}S. Pironio and S. Massar, Security of practical private randomness generation,
		\href{https://link.aps.org/doi/10.1103/PhysRevA.87.012336}{Phys. Rev. A 87, 012336 (2013)}.
		
		\bibitem{DIRE3}S. Pironio, A. Acín, S. Massar, A. Boyer de La Giroday, D. N. Matsukevich, P. Maunz, S. Olmschenk, D. Hayes, L. Luo, T. A. Manning, and C. Monroe, Random numbers certified by Bell’s theorem, 
		\href{https://doi.org/10.1038/nature09008}{Nature (London) 464, 1021 (2010)}.
		
		\bibitem{lemma}M. Li, S. Shen, N. Jing, S. M. Fei, and X. Q. Li-Jost,	Tight upper bound for the maximal quantum value of the Svetlichny operators,
		\href{https://doi.org/10.1103/PhysRevA.96.042323}{Phys. Rev. A 96, 042323 (2017)}.
		
		\bibitem{minimumentropy}R. König, R. Renner, and C. Schaffner, The operational meaning of min- and max-entropy,
		\href{https://doi.org/10.1109/TIT.2009.2025545}{IEEE Trans. Inf. Theory 55, 4337 (2009)}.
		
		\bibitem{NPA1}M. Navascués, S. Pironio and A. Acín, A convergent hierarchy of semidefinite programs characterizing the set of quantum correlations,
		\href{https://dx.doi.org/10.1088/1367-2630/10/7/073013}{New J. Phys. 10, 073013 (2008)}.
		
		\bibitem{NPA2}M. Navascués, S. Pironio and A. Acín, Bounding the Set of Quantum Correlations,
		\href{https://doi.org/10.1103/PhysRevLett.98.010401}{Phys. Rev. Lett. 98, 010401 (2007)}.
		
		\bibitem{NPA3}J. Vallins, A. B. Sainz, and Y. C. Liang, Almost-quantum correlations and their refinements in a tripartite Bell scenario,
		\href{https://doi.org/10.1103/PhysRevA.95.022111}{Phys. Rev. A 95, 022111 (2017)}.
		
		\bibitem{NPA4}M. Navascués, Y. Guryanova, M. Hoban, et al, Almost quantum correlations,
		\href{https://doi.org/10.1038/ncomms7288}{Nat Commun 6, 6288 (2015)}.
		
		
	\end{thebibliography}
\end{document}